 \newtheorem{thm}{Theorem}[section]
 \newtheorem{lem}[thm]{Lemma}
 \theoremstyle{definition}
 \theoremstyle{remark}
 \newtheorem{rem}[thm]{Remark}
 \numberwithin{equation}{section}
\renewcommand{\a }{\alpha }
\renewcommand{\b }{\beta }
\newcommand{\g }{\gamma}
\newcommand{\be}{\begin{equation}}
\newcommand{\ee}{\end{equation}}
\renewcommand{\epsilon}{\varepsilon}
\begin{document}

\title[Analysis of COVID-19 evolution in Senegal: impact of health care capacity]{Analysis of COVID-19 evolution in Senegal: impact of health care capacity}

\author[Fall M.M.]{Mouhamed M. Fall} 
\address{The African Institute for Mathematical Sciences (AIMS)\br
Mbour, Senegal.}
\email{mouhamed.m.fall@aims-senegal.org}
\author[Ndiaye B.M.]{Babacar M. Ndiaye}
\address{Laboratory of Mathematics of Decision and Numerical Analysis\br
University of Cheikh Anta Diop.\br
BP 45087, 10700. Dakar, Senegal.}
\email{babacarm.ndiaye@ucad.edu.sn}
\author[Seydi O.]{Ousmane Seydi}
\address{D\'epartement Tronc Commun, \br
 \'Ecole Polytechnique de Thi\`es, Senegal.}
\email{oseydi@ept.sn}
\author[Seck D.]{Diaraf Seck}
\address{Laboratory of Mathematics of Decision and Numerical Analysis\br
University of Cheikh Anta Diop, Dakar, Senegal.\br
IRD, UMMISCO, Dakar, Senegal.}
\email{diaraf.seck@ucad.edu.sn}

\thanks{This work was completed with the support of the NLAGA project}
\keywords{COVID-19, health care capacity, parameter estimates, logistic growth,  forecasting, machine learning.}

\date{November 09, 2020}
\begin{abstract}
We consider a compartmental model from which we incorporate  a time-dependent health care capacity having a logistic growth. This allows us to take into account the Senegalese authorities response in anticipating the growing number of infected cases. We highlight the importance of anticipation and timing to avoid overwhelming that could impact considerably the treatment of patients  and the well-being of health care workers. A condition, depending on the health care capacity and the flux of new hospitalized individuals, to avoid possible overwhelming is provided. We also use machine learning approach to project forward the cumulative number of cases from March 02, 2020, until  1st December, 2020.
\end{abstract}
\maketitle

\section{Introduction}\label{intro}
COVID-19, declared a pandemic by the World Health Organization (WHO) \cite{WHO1} on 11 March 2020, is still spreading around the world up to date 26 September 2020. The number of people infected is beyond 32 million on 26 September 2020 with 989,380 deaths \cite{Wiki}. In Senegal, the number of cumulative cases is currently 14839 with 2624 individuals undergoing treatment on 25 September 2020\cite{msas}. The first cases, from Wuhan, were notified to WHO on 31 December 2019 \cite{WHO1,WHO2} while, Senegal notified its first case on 02 March 2020  \cite{msas}. Because of its limited resources, as in many sub-Saharan African countries, it is therefore valuable to understand the growth and the timing in responding to the   logistic needs of their health system.  We note that  several developed countries that nevertheless have high-capacity health structures have been overwhelmed and this considerably impacted negatively  in the  combat against the pandemic. In \cite{Moghadas} it is discovered that the growing number of COVID-19 cases in the United States could gravely challenge the critical care capacity, thereby exacerbating case fatality rates. As pointed out in \cite{Weissman}, under-resourced health systems may pose a threat to patient care as well as the safety and well-being of health care workers. \\
From the appearance of the first cases of COVID-19, the Senegalese authorities  decided   to  anticipate by increasing gradually   the capacity of its hospitals to take care of patients suffering from COVID-19.  One of  the aims of the  present paper is  to study different scenarios of involution of the number of virus patients  by varying the  growth rate  and the number of, say, beds or rooms.  
 As an approach, we use a mathematical model that integrates the different stages of individuals (Susceptible, Exposed, Asymptomatic, Symptomatic, Removed) with the parameters depending on the health resources availability. 
To this cope, we consider a modified SEWIR model \cite{ndiaye2} in which we include a time-dependent carrying capacity $K(t)$ representing the evolution of the Senegalese hospitals' capacity in response to the growing number of cases in Senegal. 
Our analysis highlights the importance of anticipation and timing to avoid overwhelming in the health system. More precisely, our analysis shows, in particular, that, while the epidemic growths exponentially fast,  by respecting a certain logistic  growth (at which the number of beds is being added) in the health system,  the epidemic might remain under control. As a consequence, a non-negligible number of  expenses can be saved.\\
Finally, a machine learning approach to project forward the cumulative number of  virus infected cases is provided.  Let us mention that several studies have been conducted in different contexts to assess the impact of hospital saturation in COVID-19. We refer for examples to  \cite{Cavallo,Demasse,IHME,Moghadas,Richard} and the references therein.
Our work complements several studies that have been carried out on the spread of Sars-CoV-2 in Senegal \cite{ndiaye4,ndiaye3,balde,Diaby,ndiaye1,ndiaye2,sarr,samb}. It should be noted that the aforementioned  studies did not address the impact of saturation on patients care in Senegal. However, some studies on the prediction of the number of new cases for Senegal have already been carried out using machine learning \cite{ndiaye4,ndiaye3,ndiaye1,ndiaye2} and on the impact of contaminated objects using compartmental models \cite{Diaby}.\\
\noindent The paper is organized as follows. In section \ref{model}, we describe in detail our model as well as the parameters involved. It is followed by Section \ref{Sec3} concerning the estimation of the unknown parameters. In section \ref{Sec-Eff} we derive the time-dependent effective reproduction number which is a key function that informs about the impact of the various decisions taken by the authorities during the epidemic. In Section \ref{Sec-Num},  we first present the calibration of the model to the cumulative number of reported cases. We then present numerical simulations on the impact of the dynamics of health capacity on the patients care. Finally, in Section \ref{mlprophet}, we use machine learning to project forward the number of cumulative cases until 1st December 2020.

\section{Model description}\label{model}
\noindent We consider compartmental epidemic model that incorporates a time-dependent carrying capacity $K(t)$ of health structures for the treatment of COVID-19 patients at the hospitals. The model reads as follow 
\begin{equation}\label{eq:SEWIR-F}
\left\{\begin{array}{llll}
S'(t)&=&-\b_1(t)[W(t)+\epsilon I(t)]S(t)\vspace{0.2cm}\\
E'(t)&=&\b_1(t)[W(t)+\epsilon I(t)]S(t)-\b_2 E(t)\vspace{0.2cm}\\
W'(t)&=&\b_2 E(t)-\b_3 W(t)\vspace{0.2cm}\\
I'(t) &=&(1-\a_1)\b_3 W(t)-\gamma I(t) \left(1-I(t)/K(t) \right)_+ - \eta I(t)\vspace{0.2cm}\\
R'(t) &=&\a_1 \b_3 W(t)+\gamma I(t) \left(1-I(t)/K(t) \right)_+ + \eta I(t)
\end{array}
\right.
\end{equation}
with initial conditions 
\begin{equation}
S(t_0)=S_0,\ E(t_0)=E_0,\ W(t_0)=W_0,\ I(t_0)=I_0,\ R(t_0)=R_0
\end{equation}
and
\begin{equation}
\left(1-I(t)/K(t) \right)_+=\left\lbrace
\begin{array}{lllll}
1-I(t)/K(t) & \text{ if } \ & I(t)\leq K(t) \vspace{0.2cm}\\
0 &  \text{ if } \ & I(t) > K(t).
\end{array}
\right.
\end{equation}
The state variable $S(t)$ is the number of susceptible individuals at time $t$,  $E(t)$ the number of exposed individuals at time $t$ and $W(t)$ the number of waiting cases for confirmation (with infectiousness) at time $t$. The number of confirmed and monitored cases at time $t$ is denoted by $I(t)$. The variable  $K(t)$ describes the effective carrying capacity of the health structure, which is the maximum number of $I(t)$ individuals that can be correctly monitored. Finally $R(t)$ represents the removed individuals at time $t$ due to death or recovery. The figure below represents the diagram flux of the model.
\begin{figure}[h!]
\begin{center}
\includegraphics[scale=0.65]{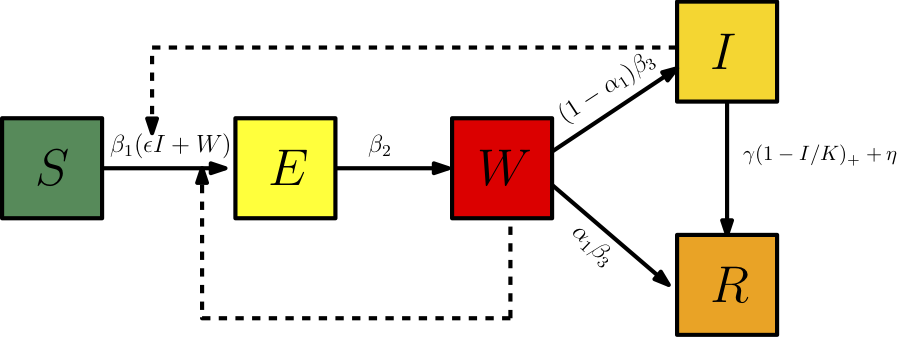}
\end{center}
\par\vspace{-0.25cm}
\caption{Flow diagram of the model.}\label{fig1:diagram}
\end{figure}
\noindent The term $\epsilon \beta_1(t) I(t) S(t)$ represents the transmission of the disease within the health structures where $\epsilon>0$ can be interpreted   as the probability that a susceptible individual is working in the   health system. More precisely we set $\epsilon=N_H/S_0$ with $N_H$ the number of health care workers and $S_0$ the initial number of susceptible individuals.  The quantity of $1/\beta_2$ is the average incubation period. We assume that the average waiting time for confirmation is $1/\beta_3$ days.  After the waiting time, a fraction $1-\alpha_1$ is monitored while the remaining fraction $\alpha_1$ is removed due to death or  recovered. The removal rate of the confirmed cases is described by $t\rightarrow \gamma \left(1-I(t)/K(t) \right)_+ + \eta$ in order to take into account the saturation effect of the public health structures. The term $\gamma \left(1-I(t)/K(t) \right)_++\eta$ which is a decreasing function of $I(t)$ describes the efficiency of the health structures. The maximal efficiency is $\gamma+\eta$ while the minimal efficiency $\eta$  is reached when the number of confirmed cases becomes greater than the carrying capacity $K(t)$. In our modeling procedure, we have introduced a time-dependent carrying capacity which is motivated by the fact that in Senegal, the carrying capacity of  the health structures was gradually increased in response to the increase in the number of infected individuals. The dynamics of the increase in the number of beds remains unknown because only announcements have been made. We chose a logistic equation to mimic the growth of the public health capacity to its saturation. Indeed, from the onset of the epidemic, the authorities announced a carrying capacity of 30 which gradually increased according to the number of cases. The maximum capacity of the structures remains unknown but in view of the announcements made, we assume that it is approximately 3000. The growth rate $r=0.1$ of $t\rightarrow K(t)$ is chosen such that it increases from $30$ on 02 March to approximately $3000$ on 10 June and that the number of hospitalized individuals is below the effective carrying capacity. There are other possibilities of choosing the growth rate $r$ but the value $r=0.1$ appears to give the best fit to the data. The figure below describes the evolution of $K(t)$ with respect to time and the evolution of the number of hospitalized individuals. \\
The evolution of the public health capacity and number of monitored are illustrated in Figure \ref{fig2:Evolcapacity}.\\
\begin{figure}[h!]
\begin{center}
\includegraphics[scale=0.45]{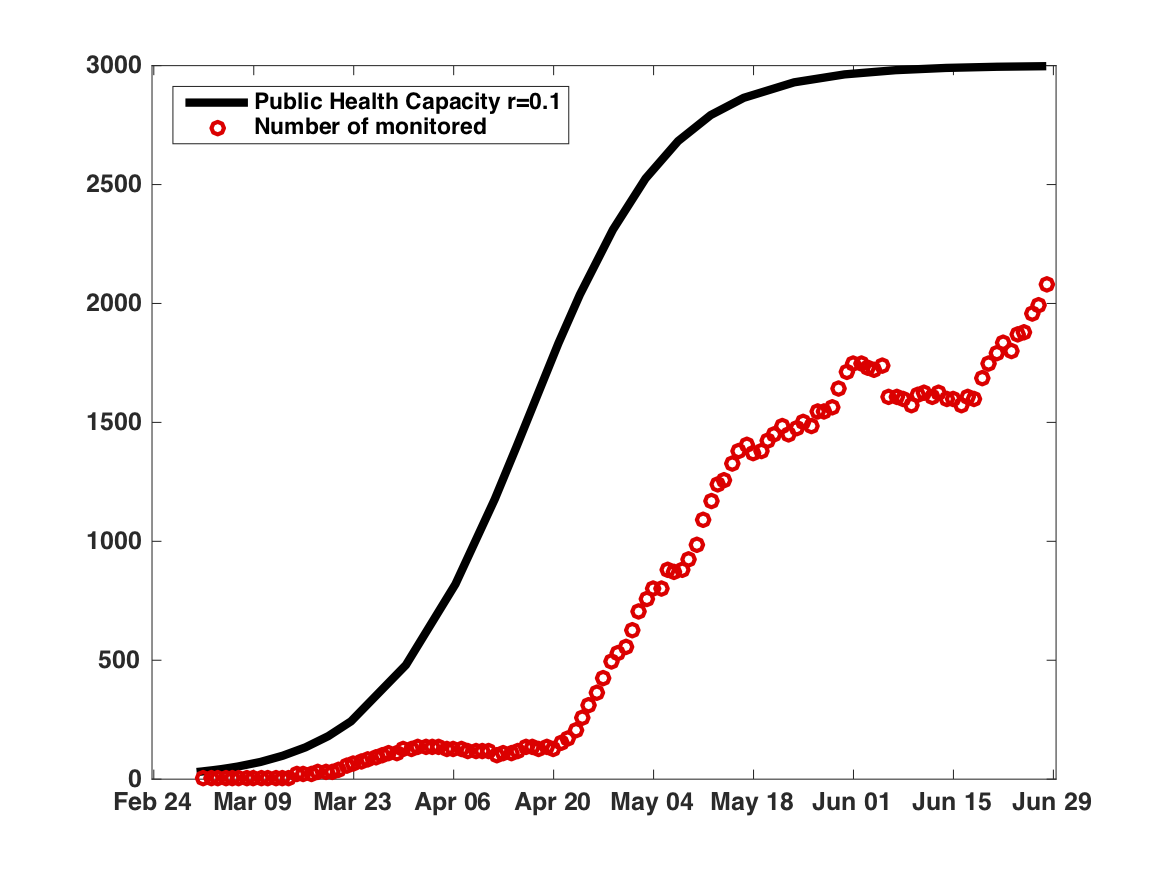}
\end{center}
\par\vspace{-0.25cm}
\caption{Evolution of the hospital carrying capacity and the number of hospitalized individuals.}\label{fig2:Evolcapacity}
\end{figure}
\ \\
\noindent More precisely the dynamic of the carrying capacity is given by: 
\begin{equation}\label{Ksa-sa}
	\dfrac{dK}{dt}=\left\lbrace
	\begin{array}{llll}
		r K(1-K/K_{max}) & \text{ if } & K \leq K_{sa}\\
		0 & \text{ if } & K > K_{sa}\\
	\end{array}
	\right.
\end{equation}
\noindent with 
$$
K_0=30,\ r=0.1 \text{ and } K_{\max}=3000.
$$
The parameter $K_{sa}$ will serve to explore different scenarios in which the efforts of increasing health capacities are aborted. In model \eqref{eq:SEWIR-F}-\eqref{Ksa-sa}, we use a time-dependent transmission rate to take into account the various events that have impacted the spread of disease. Indeed several government actions such as partial lockdown, social distancing, interurban traffic ban, border closure have impacted the progression of the disease. The time-dependent transmission rate has the following form   
\begin{equation}\label{b2.3}
\beta_1(t)=\left\lbrace
\begin{array}{lllll}
\beta_{10} & \text{ if } & t_0\leq t < t_1\\
\max\left(\beta_{10} e^{-\theta_1(t-t_1)}, \ \theta_2 \beta_{10}\right)& \text{ if } & t_1 \leq t < t_2 \\
\min(\beta_{11} e^{\theta_1(t-t_2)}, \ \theta_3 \beta_{10})& \text{ if } & t_2 \leq t < t_3 \\
\max\left(\beta_{12}e^{-\theta_1(t-t_3)},\ \theta_4 \beta_{10} \right) & \text{ if } & t\geq t_3,
\end{array}
\right.
\end{equation}
where the parameters are estimated by using the data on reported cases. More precisely we have 
\begin{equation*}
\beta_{11}=\max\left(\beta_{10} e^{-\theta_1(t_2-t_1)}, \ \theta_2 \beta_{10}\right) \ \text{ and } \ \beta_{12}=\min(\beta_{11} e^{\theta_1(t_3-t_2)}, \  \theta_3 \beta_{10})
\end{equation*}
with $t_0= 02$ March the date at which the first case was reported \cite{msas}, $t_1= 24$ March, $t_2= 12$ April and $t_3= 20$ April. We distinguish three phases. Phase $1$ takes place between the dates $t_0=$ 02 March, the beginning of the epidemic,  until the first government  actions at date $t_1= 24$ March. Phase 2 which begins at $t_1$, the date on which decisions such as public closing,  interurban traffic ban, border closure were taken, is ended at date $t_2= 12$ April. The date $t_2$ corresponds to a sudden increase in the number of cases with a decline on approximately date $t_3= 20$ April where Phase 3 began and cover the period of $20$ April to $28$ June the date at which we stop our study. The cause of the sudden increase in $12$ April remains unknown to us at this time. More accurate data would be needed  in order  to understand the advent of this phenomenon. However, we can follow the evolution of the dynamics of the epidemic using the  time-dependent transmission rate.  The parameter $\theta_1$ describes the intensity of the government actions or sudden events that impacted the transmission rate. We refer to \cite{seydi2} where such interpretation has been used to describes the government actions. The parameters $\theta_2$, $\theta_3$ and $\theta_4$ allow us to measure the impact of each phase on the transmission rate $\beta_{10}$. The values of the parameters $\theta_i$, $i=1,2,3,4$ estimated in Section \ref{Sec3} are 
\begin{equation}\label{b2.4}
\theta_1=5.1366, \    \theta_2=0.1224, \    \theta_3=1.7130, \   \theta_4 =0.356,
\end{equation}
and the estimated value of $\beta_{10}$ in Section \ref{Sec3} is 
\begin{equation}\label{b2.5}
\beta_{10}=3.8593 \times   10^{-8}. 
\end{equation}
The figure below describes the transmission rate $t\rightarrow \beta_1(t)$.
\begin{figure}[h!]
\begin{center}
\includegraphics[scale=0.45]{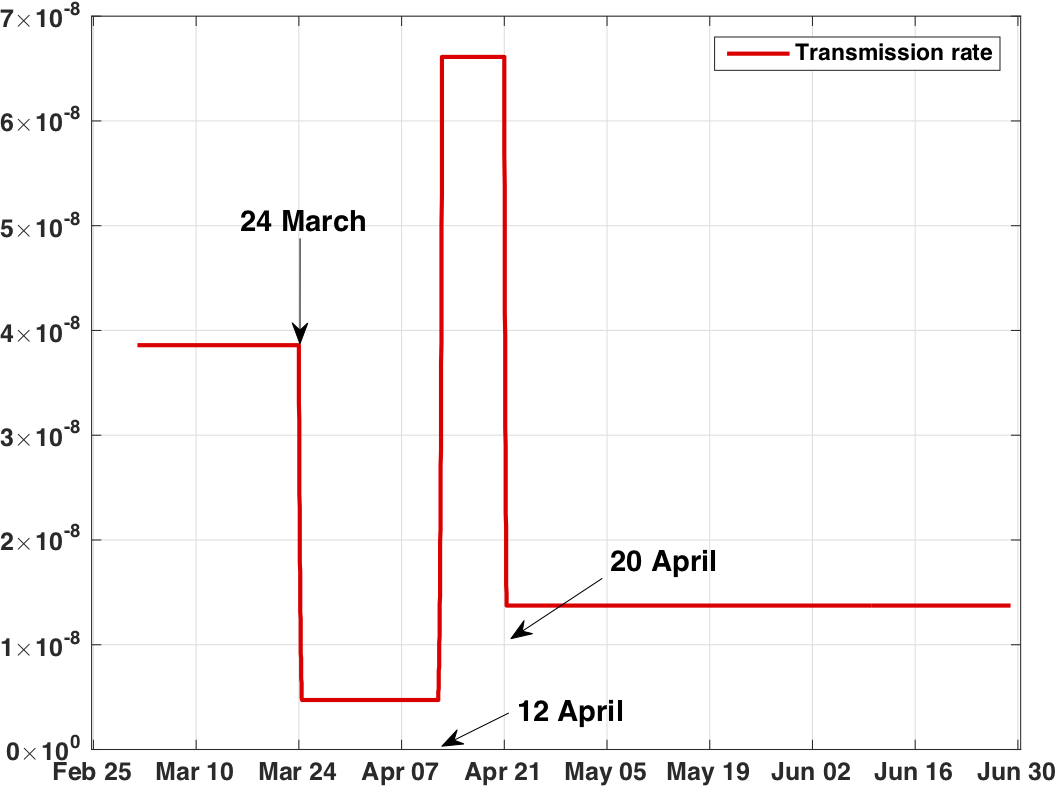}
\end{center}
\par\vspace{-0.25cm}
\caption{Time-dependent transmission $t\rightarrow \beta_1(t)$ of the period from 02 March to 28 June 2020.}\label{fig2:Transmission}
\end{figure}
\noindent Next we summarize the descriptions of the remaining symbols in the table below.
\begin{table}[ht!]
\centering
\caption{Parameters of the model.}\label{tab_parameters}
\begin{tabular}[t]{llllll}
\toprule
{\footnotesize{\textbf{Symbol}}} &{\footnotesize{\textbf{Description}}}& {\footnotesize\textbf{Method}}\\
\midrule
{\footnotesize $t_0 $}
			& {\footnotesize Onset of the epidemic}
			& {\footnotesize fixed}
			\\
			{\footnotesize $S_0$}
		    & {\footnotesize Number of susceptible at time $t_0$ }
		    & {\footnotesize fixed}
			\\
            {\footnotesize$E_0$}
		    & {\footnotesize Number of exposed individuals at time $t_0$ }
		    & {\footnotesize fitted}
			\\
			{\footnotesize $W_0$}
		    & {\footnotesize Number of wating individuals at time $t_0$ }
		    & {\footnotesize fitted}
		    \\
			{\footnotesize $I_0$}
		    & {\footnotesize Number of monitored infectious at time $t_0$ }
		    & {\footnotesize fitted}
		    \\				
			{\footnotesize $ \beta_1(t) $}
			&{\footnotesize Transmission rate}
			&
			{\footnotesize fitted}
			&	
			\\
               {\footnotesize $ 1/\beta_2$}
			& {\footnotesize average latent period}
			&
			{\footnotesize fixed}
			&
			\\

              {\footnotesize $\beta_3$}
			& {\footnotesize Removal rate of the wating individuals }
			& {\footnotesize fixed}
			\\
			{\footnotesize $1-\alpha_1$}
			& {\footnotesize Fraction of wating individuals that become monitored}
			& {\footnotesize fitted }
			\\
			{\footnotesize $ \eta$}
			&  {\footnotesize Rate at which monitored individuals are removed at hospital saturation}
			&  {\footnotesize fixed}
			\\
			{\footnotesize $\gamma+\eta$}
			& {\footnotesize Maximal removal rate of the monitored individuals}
			& {\footnotesize fixed}
			\\
			{\footnotesize $K_{max}$}
			& {\footnotesize Maximal hospital carrying capacity}
			& {\footnotesize fixed}
			\\
			{\footnotesize $r$}
			& {\footnotesize Intrinsic growth rate of the hospital carrying capacity}
			& {\footnotesize fixed} \\
			{\footnotesize $\epsilon$}
			& {\footnotesize Number of healthcare worker per susceptible individual }
			& {\footnotesize fixed} \\
\bottomrule
\end{tabular}
\end{table}
\noindent The values of fixed parameters in Table \ref{tab_parameters} are given in the next table 
\begin{table}
\centering
\caption{{\footnotesize Values of the fixed parameters.}}\label{tab_parameters2}
\begin{tabular}[t]{llllll}
\toprule
{\footnotesize{\textbf{Symbol}}} & {\footnotesize\textbf{Value}} & {\footnotesize\textbf{Reference}}\\ 
\midrule
{\footnotesize $t_0 $}
			  & {\footnotesize 02 March 2020} & {\footnotesize \cite{msas}}\\
			{\footnotesize $S_0$} &  {\footnotesize$7  857 353$} & {\footnotesize \cite{ansd}	}\\
              {\footnotesize $ 1/\beta_2$} &  {\footnotesize $1$ day} & {\footnotesize \cite{seydi3}} \\
              {\footnotesize $ 1/\beta_3$} &  {\footnotesize $10$ days} & {\footnotesize WHO} \\
			{\footnotesize $\alpha_1$} &{\footnotesize $30$ \%}  & {\footnotesize Assumed} \\
			{\footnotesize $1/ \eta$}&{\footnotesize $30$ days} & {\footnotesize Assumed} \\
			{\footnotesize $1/(\gamma+\eta)$}& {\footnotesize $15$} days  &   {\footnotesize Assumed}  \\
			{\footnotesize $K_{max}$} & {\footnotesize $3000$} & {\footnotesize Assumed} \\
			{\footnotesize $K_{0}$} & {\footnotesize $30$} & {\footnotesize Assumed} \\
			{\footnotesize $r$} & {\footnotesize $0.1$} & {\footnotesize Assumed} \\
			{\footnotesize $\epsilon$} & {\footnotesize $1000/S_0$} & {\footnotesize Assumed} \\
\bottomrule
\end{tabular}
\end{table}

\section{Parameter estimates using the early phase} \label{Sec3}
In this section, will use the approach developed in \cite{seydi1,seydi2,seydi3,seydi4} in order to estimate the parameters from the early phase of the epidemic. 
Let $C(t)$ be the cumulative number of reported cases at time $t$ be defined by
\begin{equation}\label{eq2.5}
C(t)=C(t_0)+(1-\alpha_1) \beta_3 \int_{t_0}^t  W(s) ds
\end{equation}
with $t_0=02$ Mars $2020$ the date of the first reported cases in Senegal so that $C(t_0)=1$.  The early phase of the epidemic corresponds to the period from the announcement of the first case on $02$ Mars $2020$ to the first government actions at $t_1=24$ March $2020$. In this phase we assume that the number of infected individuals growth exponentially. Hence with such assumption, the cumulative number of cases has the following form 
\begin{equation}\label{eq2.6}
C(t)=C(t_0)+\chi_{2} e^{\chi_{1}(t-t_0)}-\chi_2,\quad t\in [t_0,t_1].
\end{equation}
The Figure \ref{compar_data_cases} below shows the comparison between the $t \rightarrow C(t)$ and the data from $02$ March until $24$ March $2020$ that allows us to estimate the parameters $\chi_1$ and $\chi_2$ to 
\begin{equation}\label{eq2.7}
\chi_1=0.1612 \ \text{ and } \ \chi_2=1.979.
\end{equation} 
\begin{figure}[h!]
\begin{center}
\includegraphics[scale=0.45]{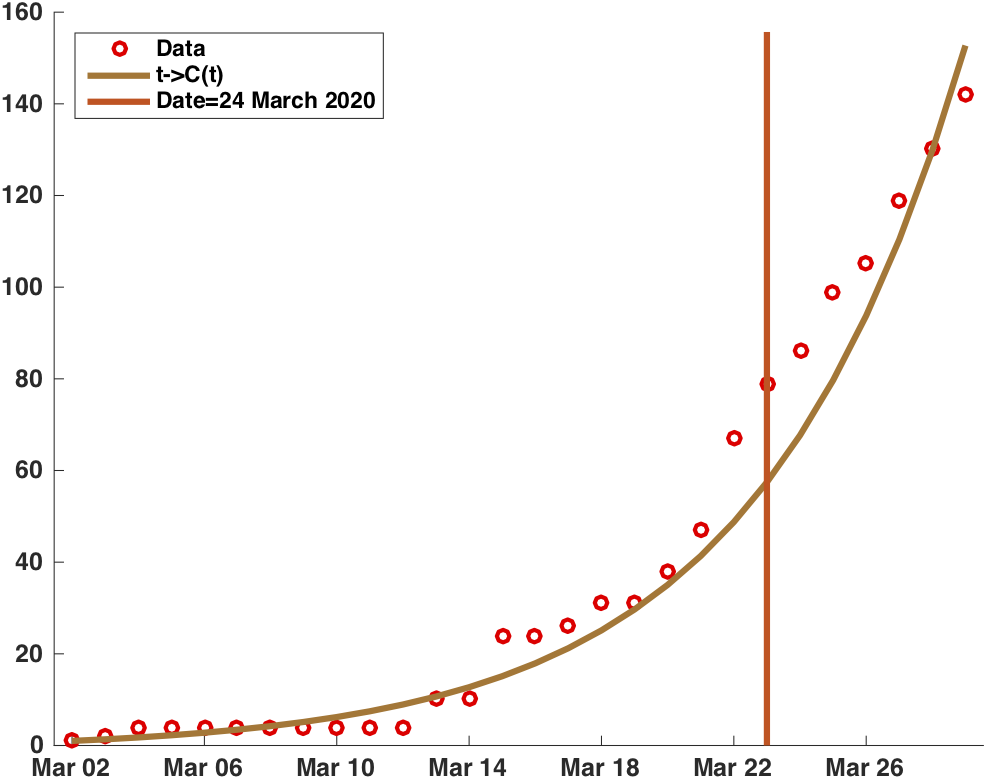}
\end{center}
\par\vspace{-0.25cm}
\caption{Comparison between the cumulative number of reported cases (data) and the map $t\rightarrow C(t)$ defined in (\ref{eq2.6}).}\label{compar_data_cases}
\end{figure}
\noindent Assuming that $S(t)$ is approximately equal to $S(t_0)$ in the short time interval  $[t_0,t_1]$ and that for each $t\in [t_0,t_1]$
$$ 
 E(t)=E_0 e^{\chi_1 (t-t_0)},\ W(t)=W_0 e^{\chi_1(t-t_0)}, \  I(t)=I_0 e^{\chi_1 (t-t_0)}
$$ 
and 
$$
\beta_1(t)=\beta_{10},
$$
it follows that we have the following approximation for a short time period 
\begin{equation} \label{eq2.8}
\left\{\begin{array}{llll}
\chi_1 E_0 &=&\b_{10}S_0 (W_0+\epsilon I_0)-\b_2 E_0\\
\chi_1 W_0 &=&\b_2 E_0 -\b_3 W_0\\
\chi_1 I_0 &=&(1-\a_1)\b_3 W_0-(\g +\eta) I_0 .
\end{array}
\right.
\end{equation}
Furthermore differentiating \eqref{eq2.5} and \eqref{eq2.6} with respect to $t$ we obtain 
$$
C'(t)=\chi_1 \chi_2 e^{\chi_1 (t-t_0)}=(1-\a_1)\b_3 W(t)=(1-\a_1)\b_3 W_0 e^{\chi_1(t-t_0)}
$$
that is 
\begin{equation}\label{eq2.9}
W_0=\dfrac{\chi_1 \chi_2}{(1-\a_1)\b_3}.
\end{equation}
Solving the second and the third equation of \eqref{eq2.8} we get:
\begin{equation}\label{eq2.10}
E_0=\dfrac{\chi_1+\beta_3}{\beta_2}W_0  \ \text{ and } \ I_0=\dfrac{(1-\a_1)\b_3}{\chi_1+\gamma+\eta} W_0.
\end{equation}
Plugging \eqref{eq2.9} and \eqref{eq2.10} into the first equation of \eqref{eq2.8}, it follows that
\begin{equation}\label{eq2.11}
\beta_{10}=\dfrac{\chi_1+\beta_2}{S_0(W_0+\epsilon I_0)}E_0.
\end{equation}
Therefore using \eqref{eq2.7} together with \eqref{eq2.9}-\eqref{eq2.11} and the values of the parameters of Table \ref{tab_parameters} we obtain 
\begin{equation}
E_0=1.18,\ W_0=4.55, \ I_0= 1.39
\end{equation}
and 
\begin{equation}
\beta_{10}=3.8593 \times   10^{-8}. 
\end{equation}
In order to obtain the values of the parameters $\theta_i$, $i=1,2,3,4$ that appear in the transmission rate $t\rightarrow \beta_1(t)$ we perform a curve fitting by using the data from $t_1=23$ March $2020$ until $28$ June $2020$. The estimated values are listed below
$$
\theta_1=5.1366, \    \theta_2=0.1224, \    \theta_3=1.7130, \  \text{and } \theta_4 =0.356.
$$
\section{The effective reproduction number} \label{Sec-Eff}
The effective reproduction number, $\mathcal{R}_{e}$, is the expected number of secondary cases produced by one typical infection joining a healthy population during its infectious period while the time-dependent effective reproduction number $\mathcal{R}_{e}(t)$ is the instantaneous transmissibility of the disease at time $t$. The time-dependent effective reproduction number allows to follow the evolution of the epidemic as the time evolves in particular how  $\mathcal{R}_{e}(t)$ is far above or  far below $1$.\\
In order to obtain $\mathcal{R}_e(t)$ we note that $\beta_1(t) S(t)$ is the number of new infections per unit of time generated by one infectious individual.  Thus assuming $\beta_1(t) S(t)$ approximately constant during $1/\beta_3$, it follows that the term 
$$
\frac{\beta_1(t) S(t)}{\beta_3}
$$ 
is the average number of new infections generated by one infectious individual (W) during its period of infectiousness $1/\beta_3$. Moreover one infectious individual (W) generates on average 
$$
(1-\a_1)\b_3 \times 1 \times \frac{1}{\beta_3}=1-\a_1
$$
infectious individual(s) (I). Those $1-\a_1$ infectious (I) has a maximum mean infectiousness period $1/(\gamma+\eta)$ so that they generate
$$
\epsilon \beta_1(t) S(t) \times (1-\a_1) \times \dfrac{1}{\gamma+\eta}
$$
new infections after $\dfrac{1}{\gamma+\eta}$ times.  \\
We finally define the time-dependent effective reproduction number as
\begin{equation}\label{Reff}
\mathcal{R}_e(t)=\dfrac{\beta_1(t) S(t)}{\beta_3}+\dfrac{(1-\a_1)\epsilon \beta_1(t) S(t)}{\gamma +\eta}. 
\end{equation}
\noindent The figure below gives the evolution of the effective reproduction number. 
\begin{figure}[h!]
\begin{center}
\includegraphics[scale=0.45]{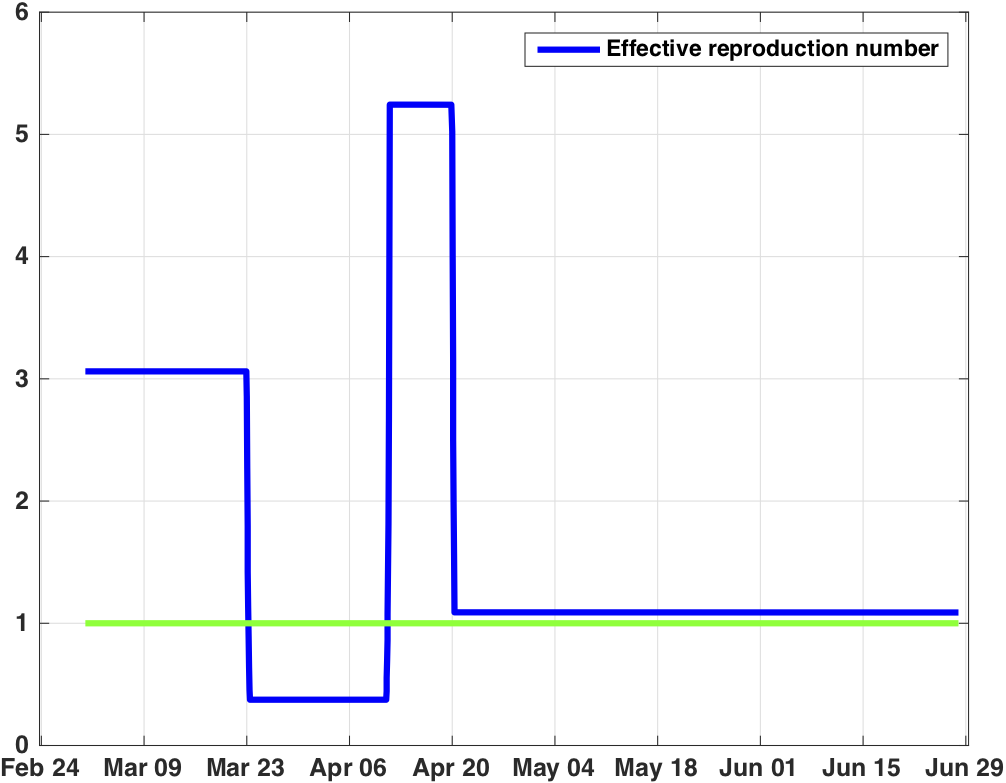}
\end{center}
\par\vspace{-0.25cm}
\caption{The effective reproduction $t\rightarrow \mathcal{R}_e(t)$ of the period from 02 March to 28 June 2020.}\label{Effective}
\end{figure}
\noindent We observe that the effective reproduction number is larger than $1$ (approximately $3$) during the early phase of the epidemic that is from 02 March to 24  March and is smaller than  $1$ during the second Phase. We can therefore observe that the first actions of the government had a positive effect on the spread of the epidemic because it tended to disappear if the effective  measures would have been kept  in the same trend. However, the sudden increase in cases on April 12th impacted the effective reproduction number to bring it to around 5. It goes down until it is close to but greater than 1 from April 20th. The cause of this increase is unknown to us but a better knowledge of the data may allow us to find an explanation.

\section{Numerical simulations}\label{Sec-Num}
In this section we perform numerical simulations of the dynamics of the spread of the COVID-19 epidemic based on our proposed model \eqref{eq:SEWIR-F}-\eqref{Ksa-sa}. We consider the regions of Dakar, Thies and Diourbel which concentrate around 95\% of the reported cases. To do so we assume that the initial number of susceptibles is the population of Dakar, Thies and Diourbel, that is to say $S_0=7  857 353$ \cite{ansd}. The values of the parameters of the model are listed in Table \ref{tab_parameters2}. The time-dependent transmission rate is given in \eqref{b2.3}-\eqref{b2.5} while the estimated initial conditions in Section \ref{Sec3} are  $E_0=1.18,\ W_0=4.55, \ I_0= 1.39$. In Figure  \ref{cumcases} below we compare the cumulative number of cases $t\rightarrow C(t)$ defined in \eqref{eq2.5} with $95 \%$ of the cumulative number of reported cases showing that our model agrees with the data. 
\begin{figure}[h!]
\begin{center}
\includegraphics[scale=0.45]{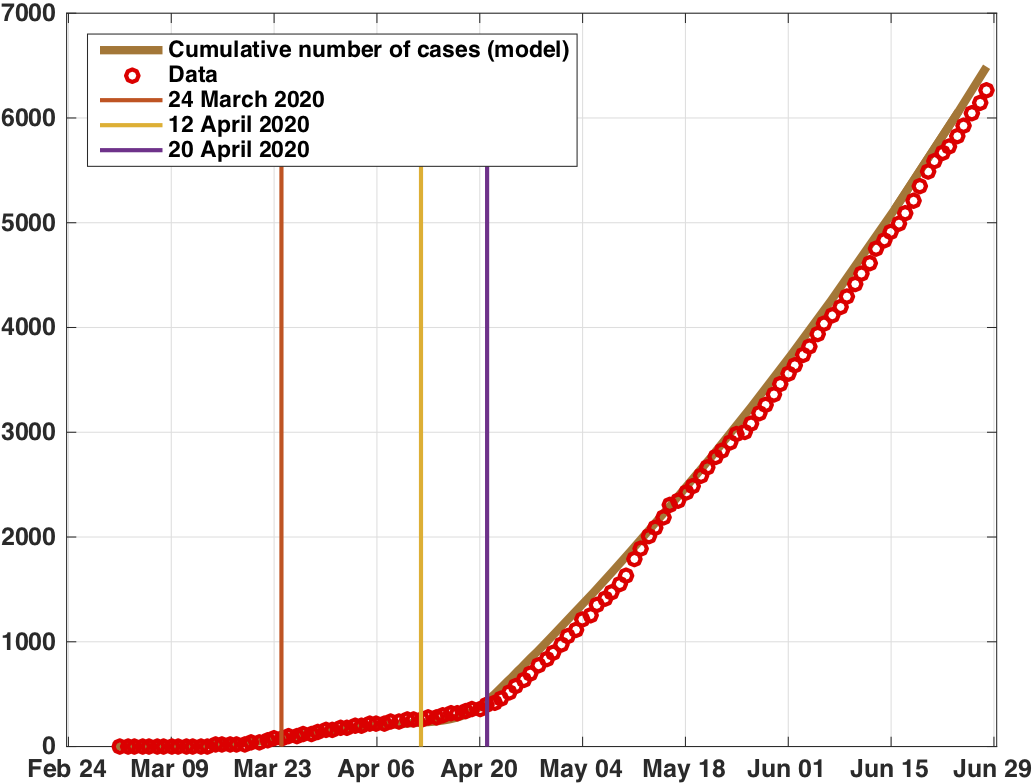}
\end{center}
\par\vspace{-0.25cm}
\caption{Comparison between 95 \% of the cumulative number of reported cases to the model for the population of Dakar, Thies and Diourbel.}\label{cumcases}
\end{figure}
\noindent In order to illustrate the impact of the saturation of health structures on the number of monitored individuals, we consider several scenarios. Indeed, we consider the same evolution of the reception capacity with interruptions in growth at predefined values. More precisely we consider the following dynamic
\begin{equation}\label{Ksa}
\dfrac{dK}{dt}=\left\lbrace
\begin{array}{llll}
r K(1-K/K_{max}) & \text{ if } & K \leq K_{sa}\\
0 & \text{ if } & K > K_{sa}\\
\end{array}
\right.
\end{equation}
where  $K_{sa}$, the saturation level, is respectively $3000$, $2500$, $2000$, $1500$ and $1000$ and is illustrated by the following figure. 
\begin{figure}[h!]
\begin{center}
\includegraphics[scale=0.45]{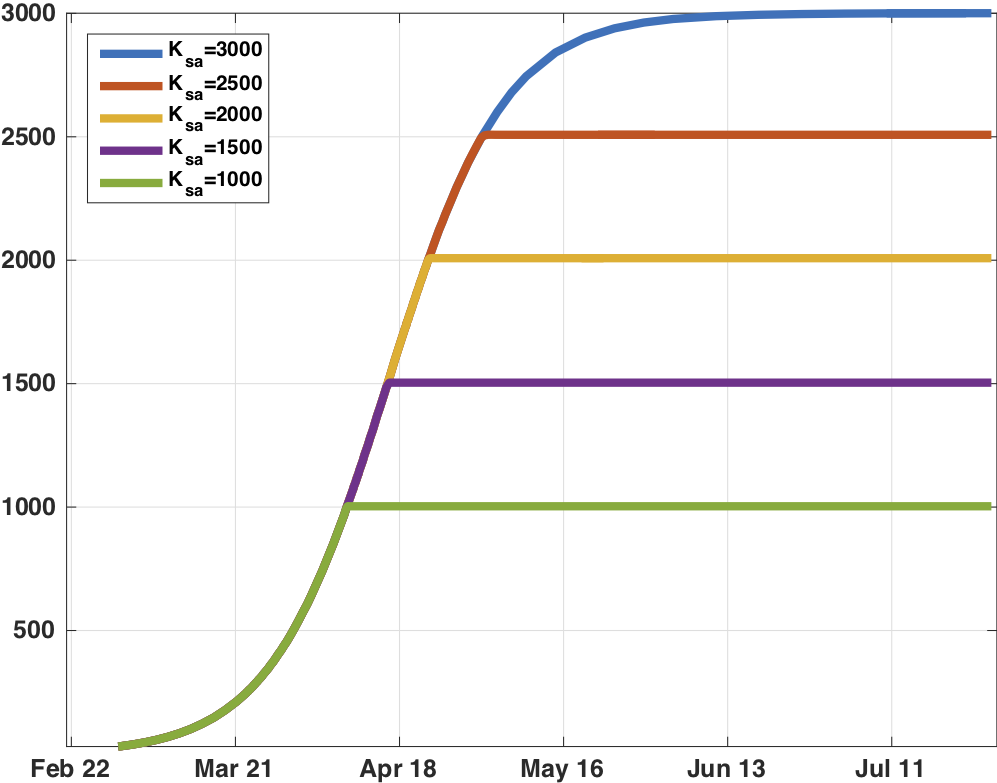}
\end{center}
\par\vspace{-0.25cm}
\caption{Evolution of the carrying capacity with saturation at $K_{sa}=3000$, $2500$, $2000$, $1500$ and $1000$.}\label{fig5:evol_Ksa}
\end{figure}
\noindent In Figure \ref{monitindiv1} we plot the corresponding number of monitored individuals that is $t\rightarrow I(t)$ when the dynamics of the carrying capacity is given by \eqref{Ksa} with $K_{sa}=3000$, $2500$, $2000$, $1500$ and $1000$ while the growth rate is fixed at $r=0.1$. We see that the epidemic is under control, until 28 June only when $K_{sa}=3000$ and $K_{sa}=2500$. This shows that the maximum value of the carrying capacity, $K_{sa}$, may impact significantly the efficiency of public health structures by slowing down the rate at which individuals are removed from the hospital. In Figure \ref{monitindiv2} we look at the influence of the growth rate $r$ on the number of monitored individuals. We see that increasing $r$ from $0.1$ to $0.5$, does not impact significantly the number of monitored individuals and the dates at which overwhelming occurs. This is the case for all values of $K_{sa}$. \\  
For  $r=0.05$, Figure  \ref{monitindiv3}, we see that the number of monitored individuals becomes larger than the maximum number of available resources at the early stage of the epidemic, around 20 April. We also observe that $I(t)$ remains unchanged for different values of $K_{sa}$. Thus our model captures and validates the obvious fact that the faster you increase the number of beds, the better you can handle the hospitalized people. Figure \ref{monitindiv3} emphasizes more the sensibility of the growth rate $r$ which can also be seen/interpreted as the timing in anticipating overwhelming. Indeed we observe that for the maximal value of resources $K_{sa}=K_{\max}=3000$, $K(t)$ stay larger than $I(t)$ until 28 June for $r\geq 0.06$ while for $r\leq 0.05$ we see that the number monitored becomes larger than available resources from mid-April to the end of June.
\begin{figure}[h!]
	\centering
	\subfloat[\ ]{%
		\includegraphics[scale=0.35]{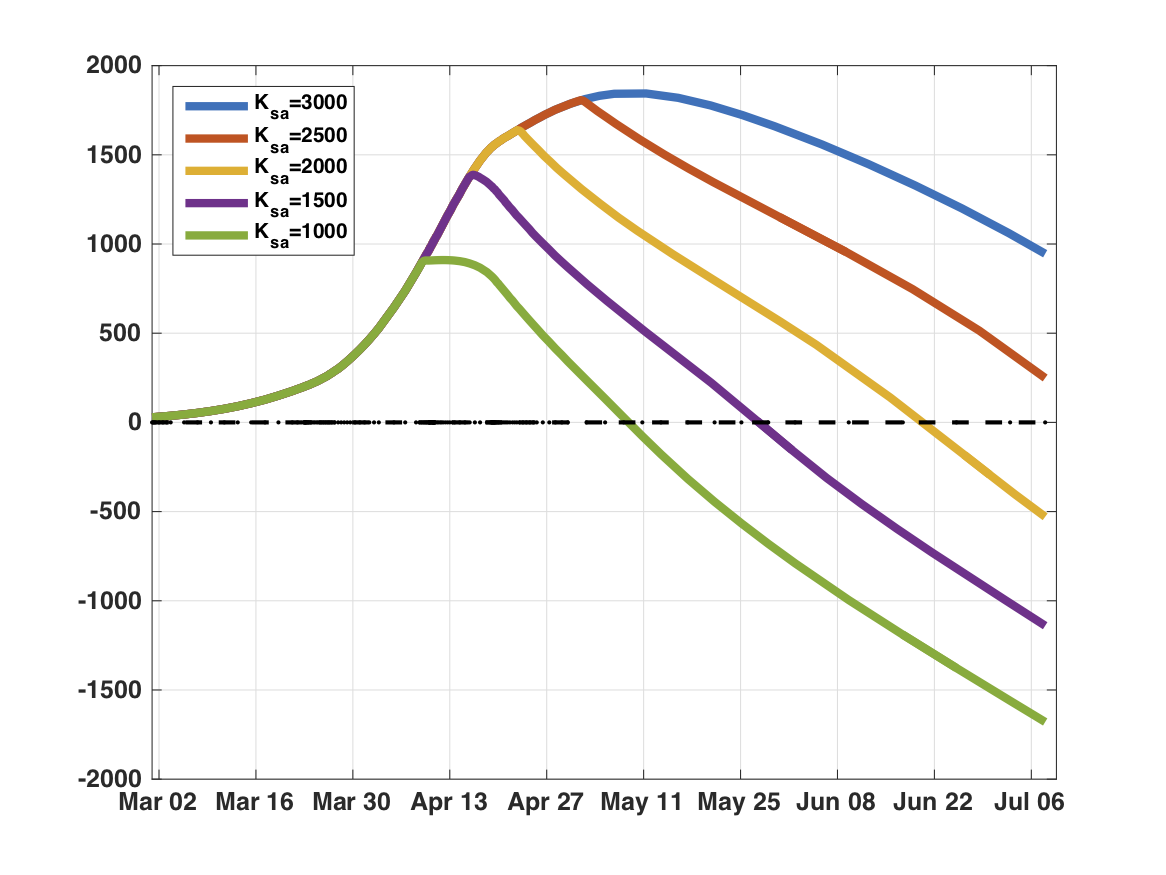}%
		\label{monitindiv1:a}%
	}%
	\subfloat[\ ]{%
		\includegraphics[scale=0.35]{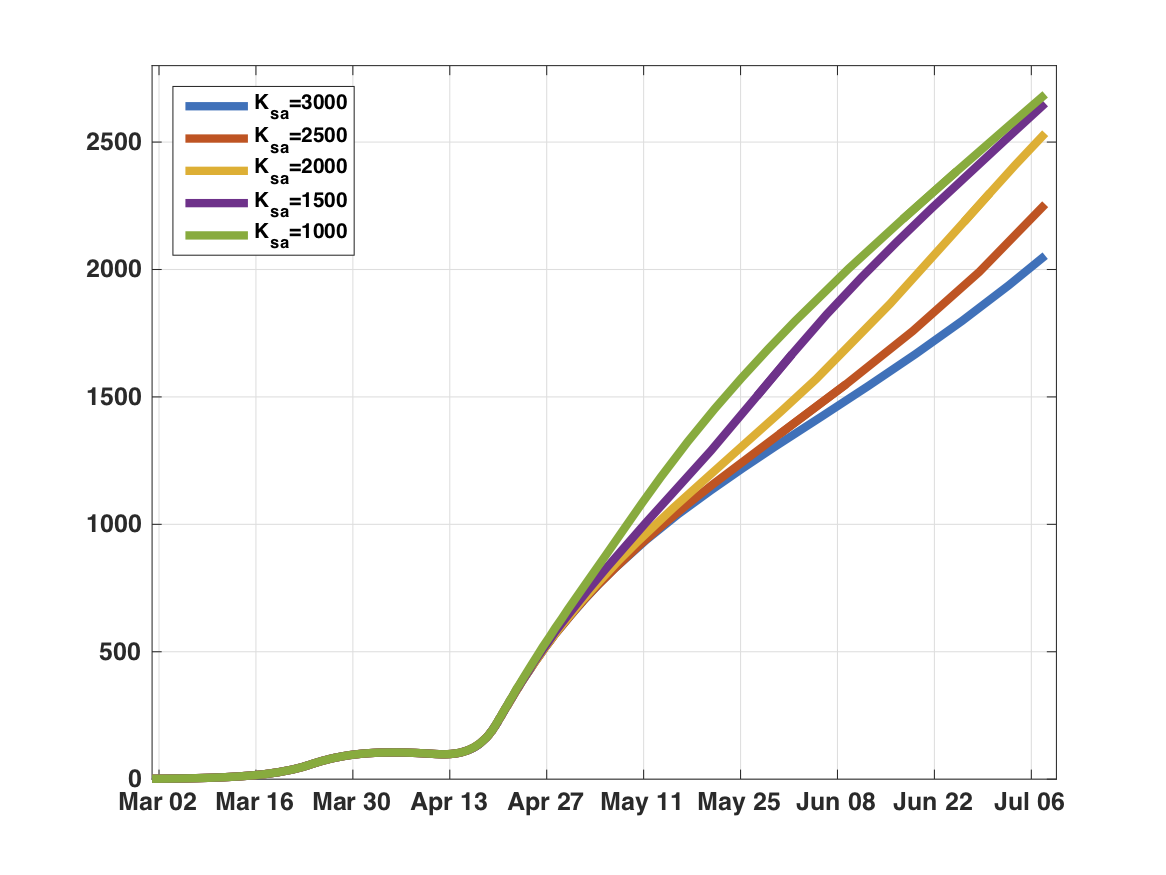}%
		\label{monitindiv1:b}%
	}
	\par\vspace{-0.25cm}
	\caption{\small{Number of monitored individuals with respect to the saturation carrying capacities values $K_{sa}=3000$, $2500$, $2000$, $1500$ and $1000$. The growth rate is $r=0.1$. Figure (A) shows the difference $K(t)-I(t)$ and Figure (B) shows the evolution of monitored individuals $I(t)$.
	}}
	\label{monitindiv1}%
\end{figure}
\begin{figure}[h!]
	\centering
	\subfloat[\ ]{%
		\includegraphics[scale=0.35]{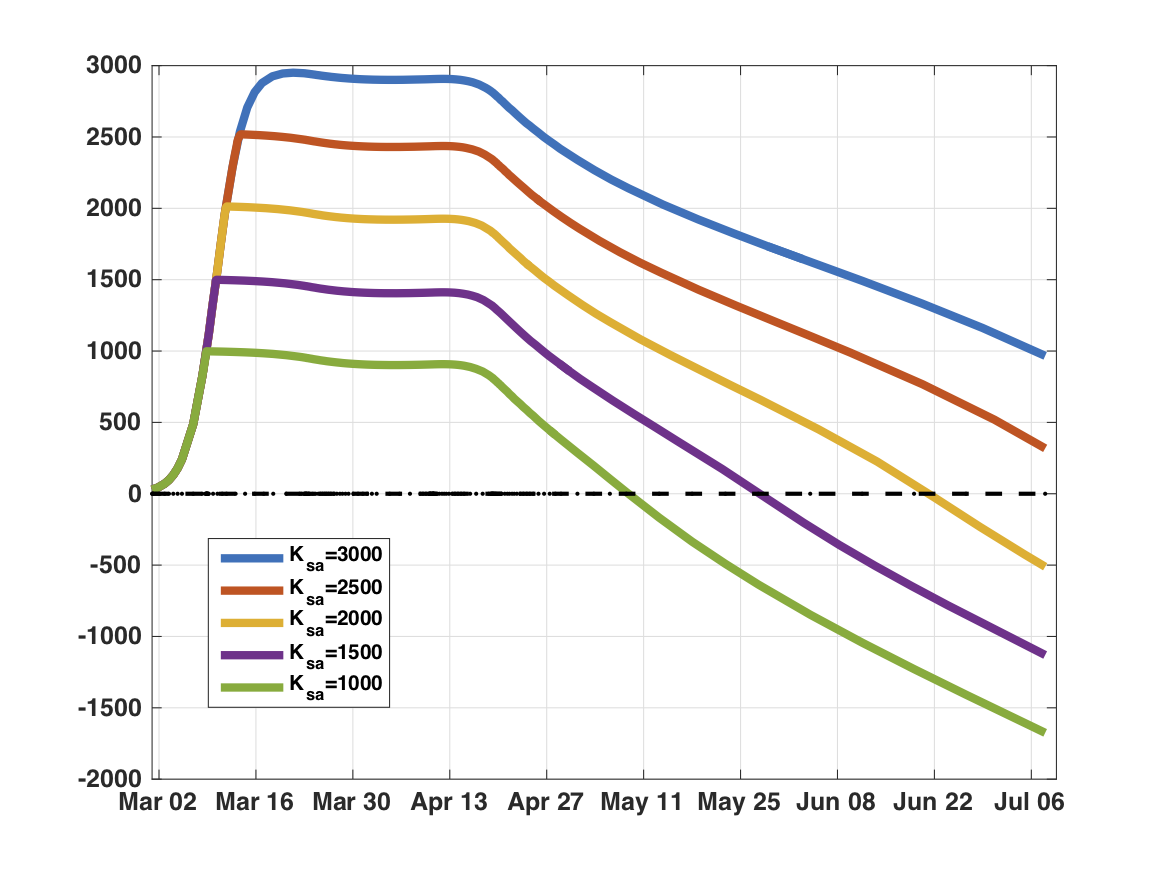}%
		\label{monitindiv2:a}%
	}%
	\subfloat[\ ]{%
		\includegraphics[scale=0.35]{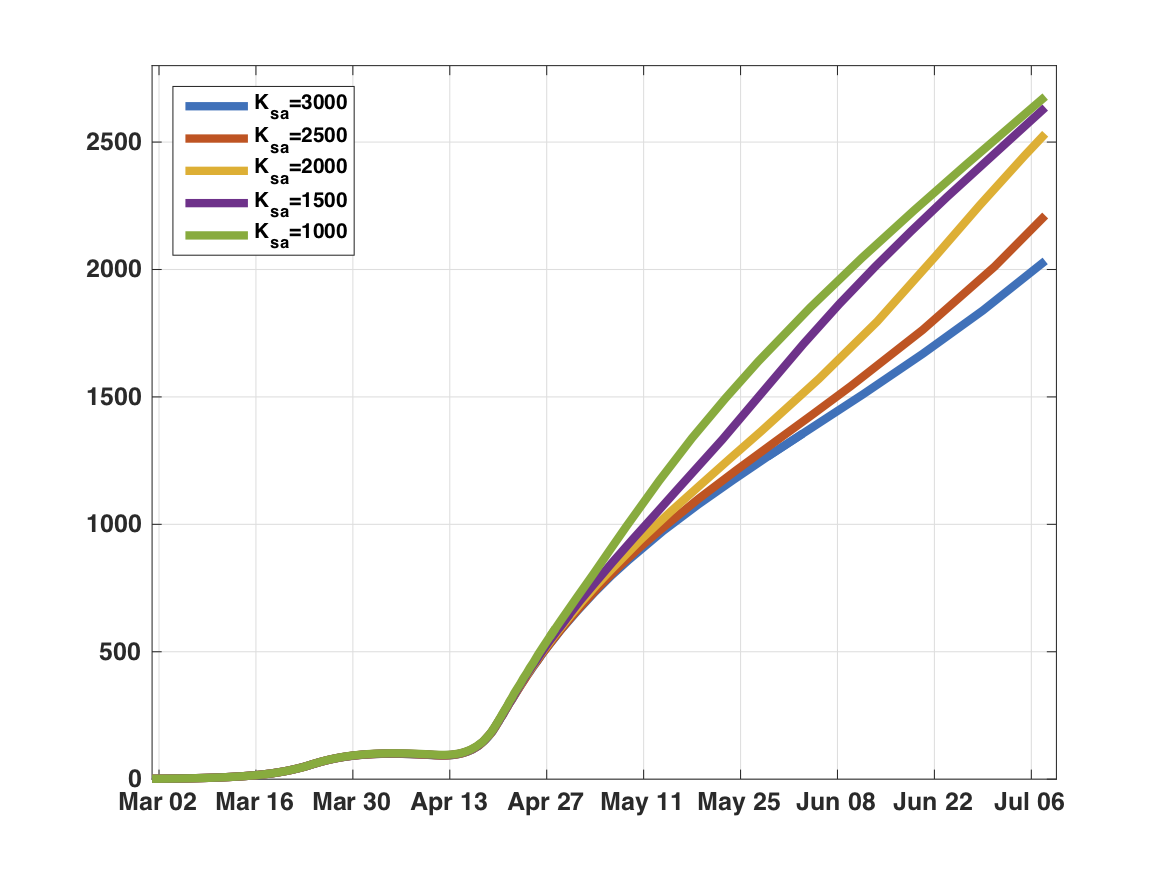}%
		\label{monitindiv2:b}%
	}
	\par\vspace{-0.25cm}
	\caption{\small{Number of monitored individuals with respect to the saturation carrying capacities values $K_{sa}=3000$, $2500$, $2000$, $1500$ and $1000$. The growth rate is $r=0.5$. Figure (A) shows the difference $K(t)-I(t)$ and Figure (B) shows the evolution of monitored individuals $I(t)$.
	}}
	\label{monitindiv2}
\end{figure}
\begin{figure}[h!]
	\centering
	\subfloat[\ ]{%
		\includegraphics[scale=0.35]{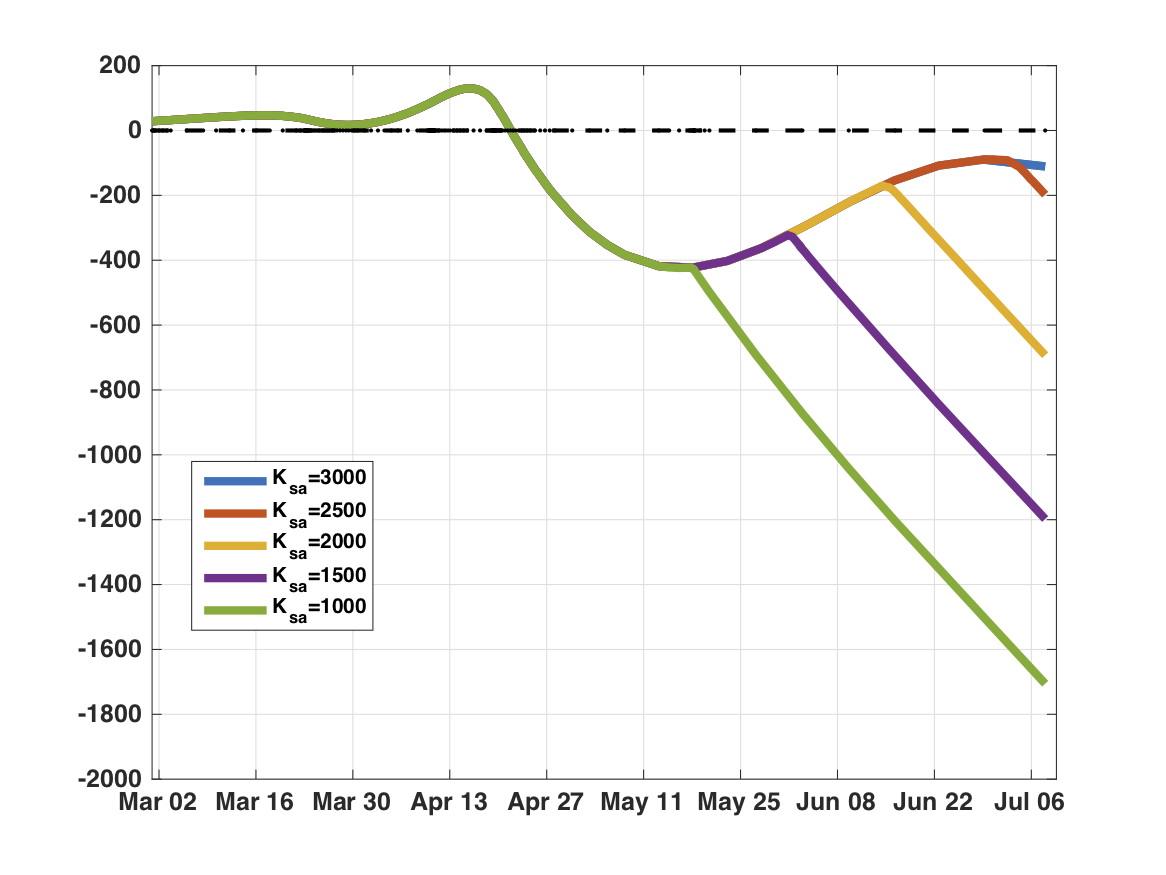}%
		\label{monitindiv3:a}%
	}%
	\subfloat[\ ]{%
		\includegraphics[scale=0.35]{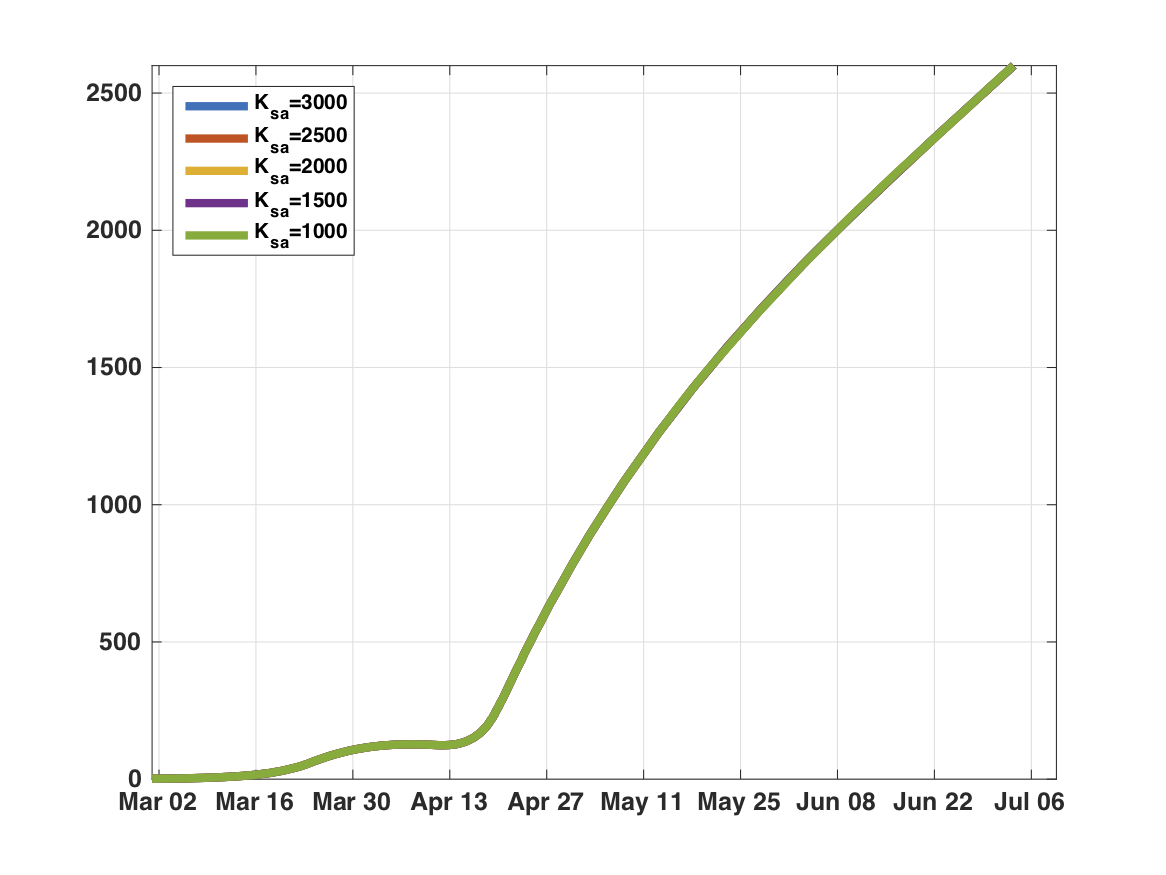}%
		\label{monitindiv3:b}%
	}
	\par\vspace{-0.25cm}
	\caption{\small{Number of monitored individuals with carrying capacities values $K_{sa}=3000$, $2500$, $2000$, $1500$ and $1000$. The growth rate is $r=0.05$. Figure (A) shows the difference $K(t)-I(t)$ and Figure (B) shows the evolution of the monitored individuals $I(t)$.%
	}}
	\label{monitindiv3}
\end{figure}
\begin{figure}[h!]
	\begin{center}
		\includegraphics[scale=0.45]{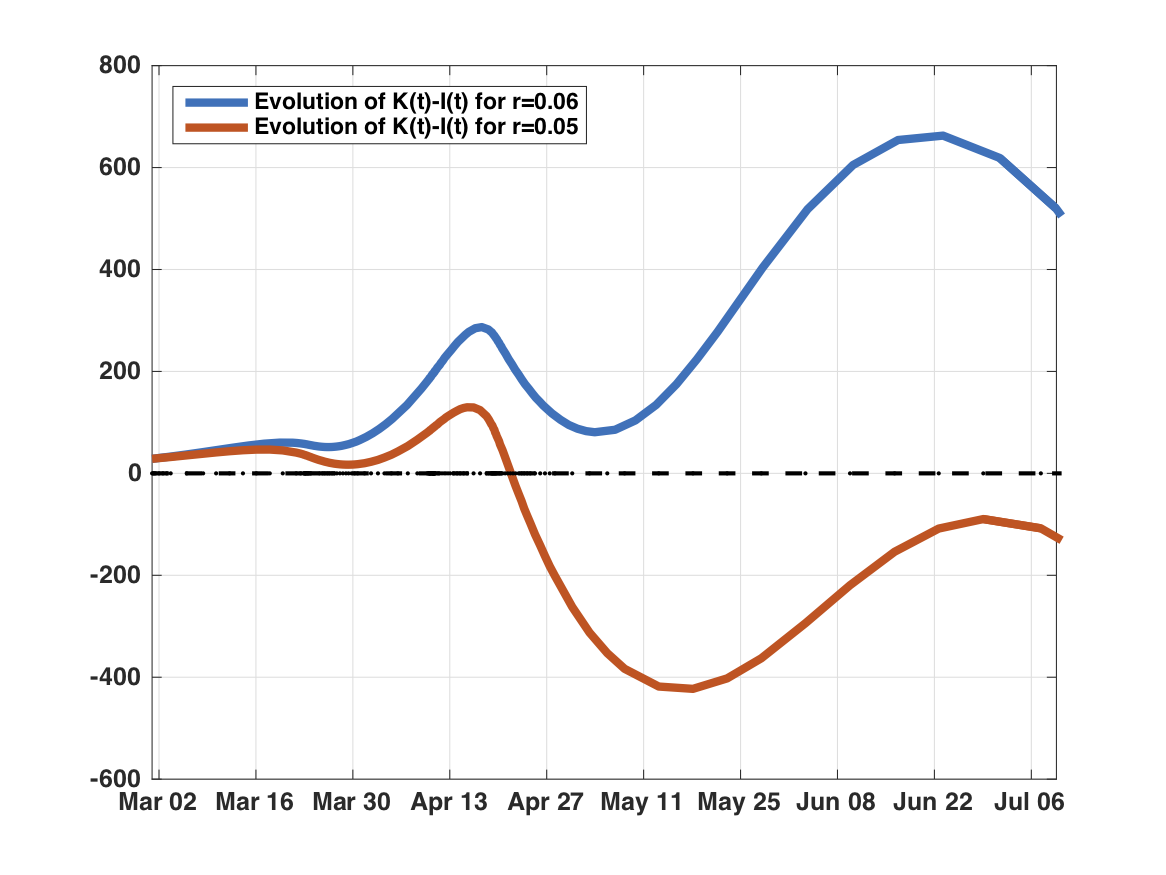}
	\end{center}
	\par\vspace{-0.35cm}
	\caption{Number of monitored individuals with carrying capacity values $K_{sa}=3000$. The growth rates are $r=0.05$ and $r=0.06$.}\label{monitindiv4}
\end{figure}
\ \\
\noindent In view of our discussion, it would be important to have a condition that guarantees the unsaturation of health structures. It is contained in the following
\begin{lem}
	Assume that $K_0>I_0$. Assume in addition that $t \to K(t)$ is increasing and differentiable with respect to $t$. If 
\begin{equation}\label{cc-New}
\dfrac{(1-\a_1)\b_3 W(t)}{\eta}<K(t),\ \forall t \in [t_0,T]
\end{equation}
	for some $T>0$ then we have 
	$$
	I(t)< K(t),\ \forall t\in [t_0,T].
	$$
\end{lem}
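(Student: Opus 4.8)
The plan is to argue by contradiction using a "first exit time" argument, which is the natural tool whenever one wants to show a solution of an ODE stays below a moving barrier. Suppose the conclusion fails. Since $K_0 > I_0$, i.e. $I(t_0) < K(t_0)$, and both $I$ and $K$ are continuous, the set $\{t \in [t_0,T] : I(t) \geq K(t)\}$ is closed and bounded below by something strictly greater than $t_0$; let $t_* \in (t_0,T]$ be its infimum. By continuity $I(t_*) = K(t_*)$, and $I(t) < K(t)$ for all $t \in [t_0,t_*)$. In particular, just to the left of $t_*$ the quantity $I - K$ is negative and approaches $0$, so necessarily
\[
I'(t_*) \geq K'(t_*).
\]
This inequality on the derivatives at the crossing point is the crux; everything else is plugging in.

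\textbf{Estimating $I'(t_*)$.} At $t = t_*$ we have $I(t_*) = K(t_*)$, so the saturation term vanishes: $\bigl(1 - I(t_*)/K(t_*)\bigr)_+ = 0$. Hence from the fourth equation of \eqref{eq:SEWIR-F},
\[
I'(t_*) = (1-\a_1)\b_3 W(t_*) - \eta I(t_*) = (1-\a_1)\b_3 W(t_*) - \eta K(t_*).
\]
Now invoke hypothesis \eqref{cc-New} at $t = t_*$: $(1-\a_1)\b_3 W(t_*) < \eta K(t_*)$, which gives $I'(t_*) < 0$. On the other hand, by assumption $t \mapsto K(t)$ is increasing and differentiable, so $K'(t_*) \geq 0$. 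Combining, $I'(t_*) < 0 \leq K'(t_*)$, contradicting $I'(t_*) \geq K'(t_*)$. Therefore no such crossing time exists and $I(t) < K(t)$ for all $t \in [t_0,T]$.

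\textbf{Where the care is needed.} The main obstacle is not any computation but making the first-exit argument rigorous: one must be sure that $t_*$ is well-defined (strictly greater than $t_0$, which uses the strict inequality $K_0 > I_0$ together with continuity), and that at $t_*$ one genuinely has $I'(t_*) \geq K'(t_*)$ rather than merely $\le$. The latter follows because $g(t) := I(t) - K(t)$ is $C^1$, satisfies $g(t) < 0$ on $[t_0,t_*)$ and $g(t_*) = 0$, so $g$ attains a local maximum at $t_*$ from the left, forcing $g'(t_*) \geq 0$. A small subtlety worth a remark: the vector field in \eqref{eq:SEWIR-F} involves the map $x \mapsto (1 - x/K(t))_+$, which is only Lipschitz, not $C^1$, in $I$; this is enough for existence, uniqueness and $C^1$ regularity of $(S,E,W,I,R)$ in $t$, so $I'$ is continuous and the evaluation $I'(t_*)$ above is legitimate. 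One could alternatively avoid even discussing $t_*$ by a Gronwall-type comparison, but the exit-time argument is cleaner and uses hypothesis \eqref{cc-New} exactly where it bites — precisely at the candidate crossing point, where the saturation term switches off and the removal rate drops to its minimal value $\eta$.
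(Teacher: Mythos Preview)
Your proof is correct and follows essentially the same approach as the paper: a first-exit-time contradiction argument, showing that at a putative first crossing point the saturation term vanishes and hypothesis \eqref{cc-New} forces the derivative of $K-I$ to be strictly positive, contradicting the definition of the crossing time. Your formulation of the contradiction (deriving $I'(t_*)\ge K'(t_*)$ from the one-sided maximum and then showing $I'(t_*)<0\le K'(t_*)$) is in fact slightly cleaner than the paper's, and the remarks on Lipschitz regularity are a welcome addition.
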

\begin{proof}
	Define 
	$$
	J(t):=K(t)-I(t),\ \forall t\geq t_0.
	$$
	Then, by assumption $J(t_0)=K_0-I_0>0$ and by continuity there is $t_1>t_0$ such that $J(t)>0$ for each $t \in [t_0,t_1]$. Define 
	$$
	\bar{t}:=\sup\left\lbrace t_0<t\leq T : J(s)>0,\ \forall s \in [t_0,t ] \right\rbrace. 
	$$
	Next, we show that $\bar{t}\geq T$. Assume by contradiction that $\bar{t}<T$. Then we have $J(\bar{t})=0$  that is  $I(\bar{t})=K(\bar{t})$ so that by (\ref{cc-New})
	$$
	J'(\bar{t})=K'(\bar{t})-I'(\bar{t})\geq \eta  I(\bar{t})-(1-\a_1)\b_3 W(\bar{t})=\eta  K(\bar{t})-(1-\a_1)\b_3 W(\bar{t})>0.
	$$
	This means that $J$ is locally strictly increasing from $\bar{t}$ which contradict the definition of $\bar{t}$. 
\end{proof}
\begin{rem}
Let us note that condition \eqref{cc-New} can be used in practice. Indeed $1/\eta$ is known and $(1-\a_1)\b_3 W(t)$ is the flux of new monitored individuals which is bounded above by the daily number of COVID-19 positive tests. Since $K(t)$ is a controllable quantity, the condition  \eqref{cc-New} may serve to anticipate an overwhelming over time.\\
It is important to note that in the above Lemma, we do not require $K(t)$ to have logistic growth. We only require $K(t)$ to be an increasing function. 
\end{rem}

\section{Machine learning for forecasting}\label{mlprophet}
In this section, we present a machine learning approach  for the  forecasting of the cumulative number of confirmed cases $C(t)$. 
First, we collect the pandemic data from \cite{msas}, from March 02, 2020, to October 22, 2020. Then, we perform a forecast with Prophet to predict the final size of coronavirus epidemy \\
The numerical tests are performed by using  Python with the Panda library 
\cite{python}, and were executed on a computer with the following characteristics: intel(R) Core-i7 CPU 2.60GHz, 24.0Gb of RAM, under the UNIX system.
\subsection{Prophet model}
Prophet \cite{prophet, sean},  is a procedure for forecasting time series data based on an additive model where non-linear trends are fit with yearly, weekly, and daily seasonality, plus holiday effects. It works best with time series that have strong seasonal effects and several seasons of historical data. Prophet is robust in dealing with missing data and shifts in the trend and typically handles well outliers. 
For the averaging method, the forecasts of all future values are equal to the average (or “mean”) of the historical data. If we let the historical data be denoted by $y_1,...,y_T$, then we can write the forecasts as
$$
\hat{y}_{T+h|T}=\bar{y}=(y_1+y_2+...+y_T)/T
$$
The notation $\hat{y}_{T+h|T}$ is a short-hand for the estimate of $y_{T+h}$  based on the data $y_1,...,y_T$.\\
A forecasting interval gives an interval within which we expect $y_t$  to lie on, with a specified probability. For example, assuming that the forecast errors are normally distributed, a 95\% forecasting interval for the  $h$-step forecast is 
$$
\hat{y}_{T+h|T}\pm1.96\hat{\sigma_h}
$$
where  ${\sigma_h}$ is an estimate of the standard deviation of the $h$-step  forecast distribution.  

\subsubsection{Diagnostics}\label{diagnostic_comp}
Here, we make some diagnostics by using the cross validation (see Table \ref{sn_crossvalid}) and the performance metrics (see Table \ref{sn_perfmetrics}) using mse, rmse, mae and mape. The Figure \ref{sn_cvm} illustrate these cross validation metrics, making 10 forecasts with cutoffs between 2020-10-08, 00:00:00 and 2020-10-17, 00:00:00  (initial='220 days', period='1 days', horizon = '5 days'). 
\par\vspace{-0.5cm}
\begin{figure}[h!]
  \subfloat[mape]{
	\begin{minipage}[1\width]{0.47\textwidth}
	   \centering
	   \includegraphics[width=1\textwidth]{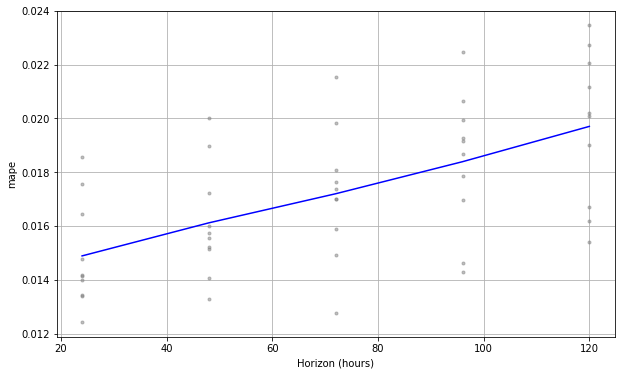}\label{sn_mape}
	\end{minipage}}
  \subfloat[mae]{
	\begin{minipage}[1\width]{ 0.47\textwidth}
	   \centering
	   \includegraphics[width=1.\textwidth]{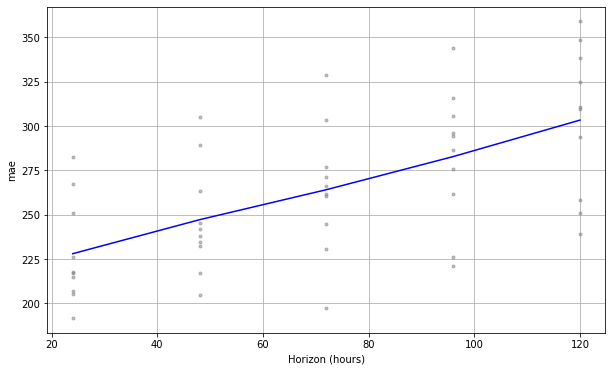}\label{sn_mae}
	\end{minipage}}
\newline
  \subfloat[rmse]{
	\begin{minipage}[1\width]{ 0.47\textwidth}
	   \centering
	   \includegraphics[width=1\textwidth]{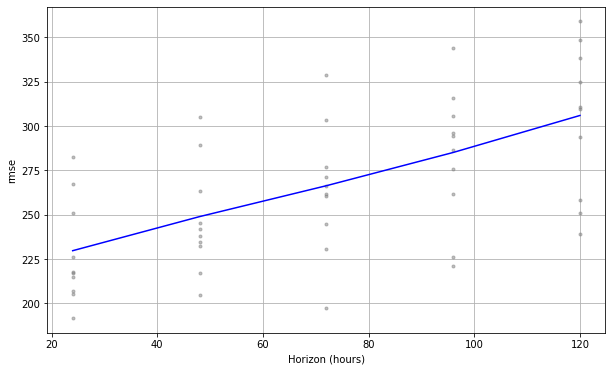}\label{sn_rmse}
	\end{minipage}}
%
  \subfloat[mse]{
	\begin{minipage}[1\width]{ 0.47\textwidth}
	   \centering
	   \includegraphics[width=1\textwidth]{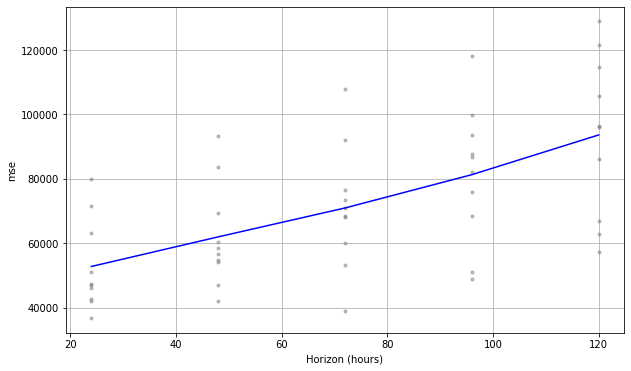}\label{sn_mse}
	\end{minipage}}
	\par\vspace{-0.25cm}
	\caption{Senegal: cross validation metrics}\label{sn_cvm}
\end{figure}
\begin{table}[h!] 
\begin{center}
\begin{tabular}{|c|c|c|c|c|c|} 
 \hline
          {\bf ds}   &         ${\bf \hat{y}}$ &    ${\bf\hat{y}_{lower}}$ &    ${\bf\hat{y}_{upper}}$  & ${\bf y} $	& {\bf cutoff} \\  \hline
 	2020-10-09    &	15495.491181 &	15364.133271& 	15630.001589& 	15213 &	2020-10-08\\ \hline
 	2020-10-10 &	15549.266664& 	15412.256057& 	15684.595605& 	15244& 	2020-10-08\\ \hline
 	2020-10-11 	&15596.622824 &	15460.701411& 	15730.823401 &	15268& 	2020-10-08\\ \hline
 	2020-10-12 &	15635.801860& 	15501.234167& 	15775.220030& 	15292& 	2020-10-08\\ \hline
 	2020-10-13 	&15666.025418& 	15524.679008& 	15804.770389 &	15307& 	2020-10-08\\ \hline         
\end{tabular}
\end{center}
\par\vspace{-0.25cm}
\caption{Senegal: cross validation}\label{sn_crossvalid}
\end{table}
\begin{table}[h!] 
\begin{center}
\begin{tabular}{|c|c|c|c|c|} 
 \hline
 	{\bf horizon} & {\bf mse } & {\bf rmse} 	&{\bf mae} &	{\bf mape}  \\  \hline
 	1 days &	52758.210719 &	229.691556 &	227.994396& 	0.014891 \\  \hline	
 	2 days &	61956.779147 &	248.911187 &		247.169729 &	0.016120 \\  \hline	
 	3 days& 	70940.217291 &	266.346048& 	264.104073& 	0.017201 \\  \hline	
 	4 days &	81284.637112 &	285.104607 &	282.785870 &	0.018392 \\  \hline	
 	5 days &	93627.316861 &	305.985812 &	303.367943 &	0.019704 \\  \hline	
\end{tabular}
\end{center}
\par\vspace{-0.25cm}
\caption{Senegal: performance metrics}\label{sn_perfmetrics}
\end{table}
\par
\noindent From Table \ref{sn_crossvalid}, by comparing the values obtained on October 13, 2020 (column $y$=15307) with the predicted one (column $ \hat{y}$=15666.025418), we see that the error is 2.29\%. The predicted value is always within the confidence interval. So, Prophet seems to give us good value.

\subsubsection{Trend changepoints and forecasting}\label{changepoints_sec} 
The rmse for Prophet Model is  53.358002. The Prophet forecasting of confirmed cases, with trend changepoints, is given by Figure \ref{sn_changepoints}, and the trends and weekly increase are given by Figure \ref{sn_procomponents}. \\
With Prophet, at $\sim$ November 12, 2020 we may obtain $>$ 16330 confirmed cases and  $>$ 16950 confirmed cases at $\sim$ December 01, 2020 (see Tables \ref{sn_1w_confcases} and \ref{sn_40d_confcases}, respectively). The forecastes of confirmed cases are illustrated in Figures \ref{prophet_3weeks_fore} and \ref{prophet_40days_fore}.
\begin{figure}[h!]
	\centering
	\includegraphics[scale=0.45]{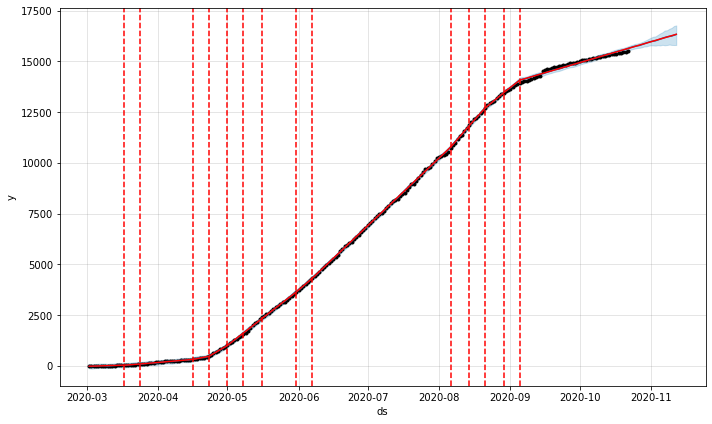}
	\par\vspace{-0.25cm}
	\caption{Senegal: changepoints of confirmed cases}\label{sn_changepoints}
\end{figure}
\begin{table}[h!]
\begin{center}
\begin{tabular}{|c|c|c|c|} 
 \hline
     {\bf ds}   &         ${\bf \hat{y}}$ &    ${\bf\hat{y}_{lower}}$ &    ${\bf\hat{y}_{upper}}$ \\  
           \hline
2020-11-08 &16213.793455& 15776.476419& 16625.216492\\ \hline
 2020-11-09& 16237.542798& 15730.499926& 16673.834380\\ \hline
 2020-11-10& 16254.670226& 15734.183777& 16750.503940\\ \hline
 2020-11-11& 16297.865380& 15743.971264& 16792.203760\\ \hline
 2020-11-12 &16335.788237 & 15755.096121& 16875.197916\\ \hline
 \end{tabular}
\end{center}
\par\vspace{-0.25cm}
\caption{Prophet: predicted cumulative confirmed cases $\sim$November 12, 2020.}\label{sn_1w_confcases}
\end{table} 
\begin{figure}[h!]
	\centering
	\includegraphics[scale=0.45]{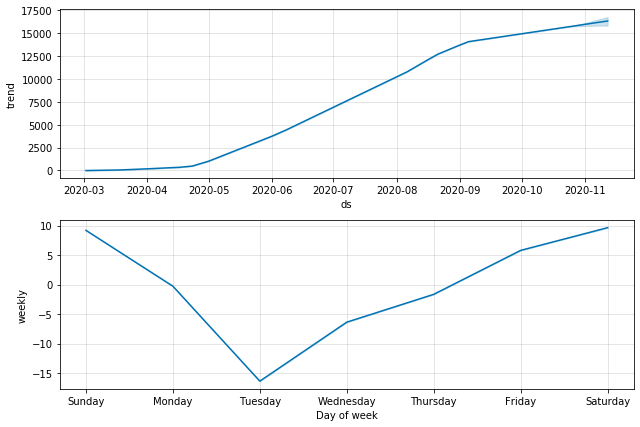}
	\par\vspace{-0.25cm}
	\caption{Senegal: Trends and weekly indrease of confirmed cases}\label{sn_procomponents}
\end{figure}
\begin{table}[h!]
\begin{center}
\begin{tabular}{|c|c|c|c|} 
 \hline
 {\bf ds}   &         ${\bf \hat{y}}$ &    ${\bf\hat{y}_{lower}}$ &    ${\bf\hat{y}_{upper}}$ \\  
           \hline
   2020-11-27 &16841.363557& 15724.240087 & 17997.233129\\ \hline
 2020-11-28& 16878.418793& 15728.625383& 18059.596350\\ \hline
 2020-11-29 &16911.177665 & 15665.479745 & 18162.293121\\ \hline
 2020-11-30& 16934.927009& 15683.579005 & 18193.215306\\ \hline
 2020-12-01 &16952.054437& 15654.901112 & 18268.706253\\ \hline         
\end{tabular}
\end{center}
\par\vspace{-0.25cm}
\caption{Prophet: predicted cumulative confirmed cases $\sim$December 01, 2020.}\label{sn_40d_confcases}
\end{table} 
\begin{figure}[h!]
  \subfloat[3 weeks forecasting]{
	\begin{minipage}[1.\width]{0.47\textwidth}
	   \centering
	   \includegraphics[width=1.1\textwidth]{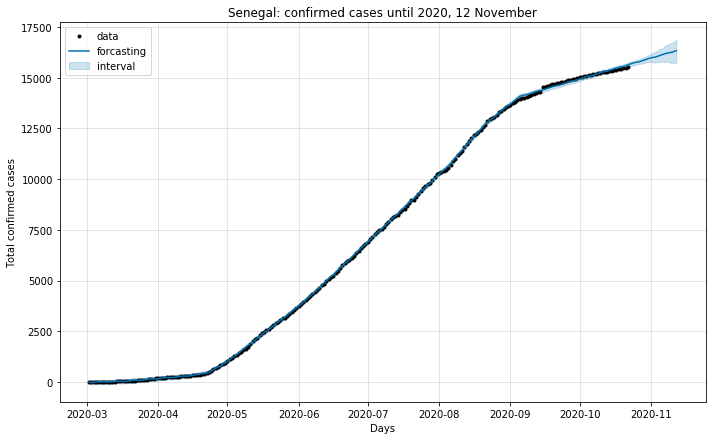}\label{prophet_3weeks_fore}
	\end{minipage}}
  \subfloat[40 days forecasting]{
	\begin{minipage}[1.\width]{ 0.47\textwidth}
	   \centering
	   \includegraphics[width=1.1\textwidth]{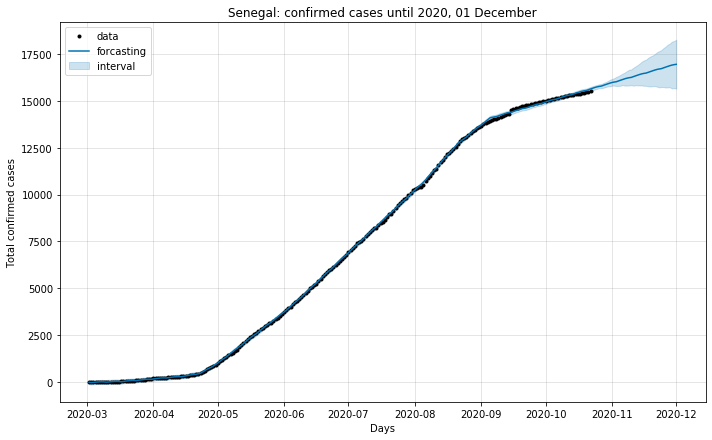}\label{prophet_40days_fore}
	\end{minipage}}
\par\vspace{-0.25cm}
	\caption{Senegal: Prophet for forecasting of confirmed cases}\label{sn_cvm}
\end{figure}

\end{document}